\newcommand{\ii}{ {\rm i} }
\def\bra#1{\mathinner{\langle{#1}|}}
\def\ket#1{\mathinner{|{#1}\rangle}}
\def\abs#1{\left | #1 \right |}
\let\myfootnote\footnote
\renewcommand{\footnote}[1]{\myfootnote{~#1}}
\newcommand{\ketbra}[2]{\vert  #1\rangle   \langle #2 \vert}
\newcommand{\ave}[1]{{\langle #1\rangle}}
\newcommand{\z}{{\rm z}}
\newcommand{\tr}{\rm{tr}}
\newcommand{\RaR}{\mathbb{R}}
\def\one{\mathbbm{1}}
\def\bra#1{\mathinner{\langle{#1}|}}
\def\ket#1{\mathinner{|{#1}\rangle}}
\def \dH{{\mathfrak{H}}}
\newtheorem{corollary}{Supplementary Corollary}
\newcommand{\dd}{ {\rm d} }
\newcommand{\x}{{\rm x}}
\newcommand{\LL}{{{\cal L}}}
\def\tr{{{\rm tr}}}
\def\End{{\,{\rm End}\,}}
\def\one{\mathbbm{1}}
\def\Re{{\,{\rm Re}\,}}
\def\im{{\,{\rm Im}\,}}
\def\Re{{\,{\rm Re}\,}}
\def\im{{\,{\rm Im}\,}}
\newcommand*{\citen}[1]{%
	\begingroup
	\romannumeral-`\x 
	\setcitestyle{numbers}%
	\cite{#1}%
	\endgroup   
}
\definecolor{wordblue}{RGB}{9, 112, 192}
\newtheorem{theorem}{Supplementary Theorem}
\title{{\textnormal{\color{wordblue}Supplementary Information: Non-stationary coherent quantum many-body dynamics through dissipation}}}
\author[1]{Bu\v{c}a, et al.} 
\date{}
\begin{document}
	\includepdf[pages=-]{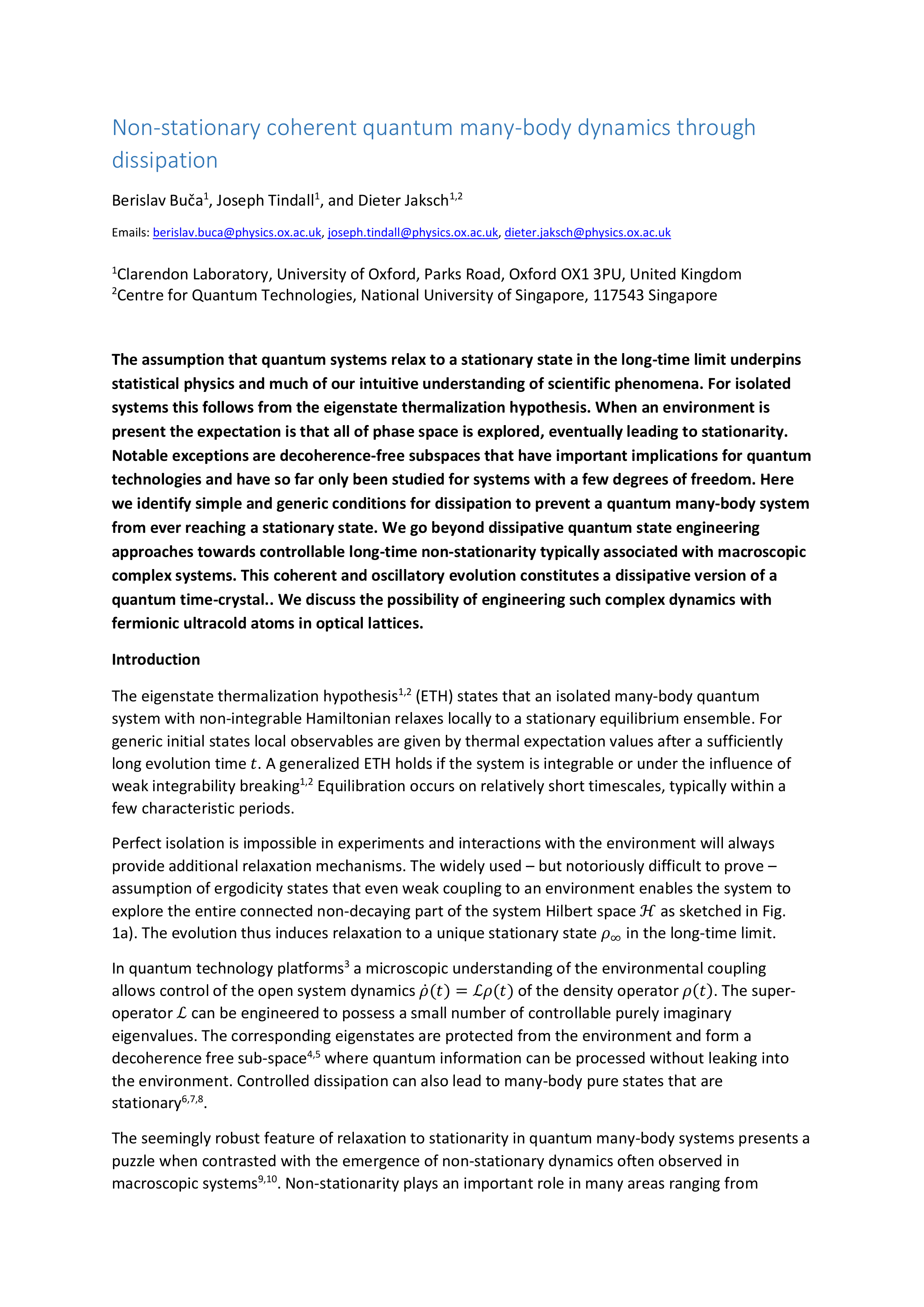}
\begin{titlepage}
	\maketitle
\end{titlepage}
	\section*{Supplementary Methods}
	\subsection*{Introduction}
	Let us recall the Lindblad equation,
	\begin{eqnarray}
	&&\frac{\dd}{\dd t}\rho(t) = \LL\rho(t):= \nonumber \\
	&&-\ii [H,\rho(t)] + \sum_\mu  \left(2L_\mu \rho(t) L_\mu^\dagger - \{L_\mu^\dagger L_\mu,\rho(t)\}\right),
	\label{eq:lindeq}
	\end{eqnarray}
	and the eigenvalue equation for the Liouvillian,
	\begin{equation}
	\LL \rho = \lambda \rho  \iff  \LL \rho^\dagger = \lambda^* \rho^\dagger. \label{eigenmodes}
	\end{equation} 
	We set $\hbar=1$ for simplicity of notation. We are concerned with cases where $\lambda$ is purely imaginary. The corresponding eigenmodes are called oscillating coherences \cite{AlbertJiang1}, or limit cycles \cite{Bellomo}. Here we emphasize that our results may be also understood as a framework for constructing genuine \emph{many-body} quantum synchronization \cite{Bellomo, Lee, Xu}. 
	
		\subsection*{Formal classification of the asymptotic subspaces and relation to decoherence-free subpaces}
	
	To make connection with the existing literature we will discuss the formal classification of the asymptotic subspace of the Liouvillian that we constructed here. 
	
	We will follow the terminology of Supplementary Reference~\citen{AlbertJiang2} (see also Supplementary Referece~\citen{BN}). The asymptotic subspaces of the quantum Liouvillian from the main text, that the dynamics is leads to in the long-time limit, is a \emph{multi-block} structure, which is the most general possible form that it can have \cite{AlbertJiang2}. 
	
	More specifically, for the Hubbard model example in the main text, by indexing the subspaces of $S^+ S^-$ ($N$) as $\mu$ ($\nu$) (where $S^\alpha$ are the total spin operators and $N$ is the total particle number) the basis of the asymptotic subspace can be written in the form,
	\begin{equation}
	\ket{z^{\mu,\nu}_1}\bra{z^{\mu,\nu}_2} \otimes P^{S^+S^-}_\mu P^{N}_\nu,
	\end{equation} 
	where $P^{S^+S^-}_\mu$ and $P^{N}_\nu$ are projectors to the corresponding subspaces of $S^+ S^-$ and $N$, respectively, and $\ket{z^{\mu,\nu}_{1,2}}$ are the corresponding eigenstates of $S^z$. 
	
	The form of the multi-block of the other examples featuring the Hubbard model, discussed later in in this supplementary, are more difficult to construct, due to the fact that the corresponding quantum Liouvillian will not be unital (i.e. $\LL \one \neq0$).  
	
	We now further remark on the differences between this type of structure and a dynamical decoherence-free subspace widely studied in literature (for references beyond the ones cited in the main text see e.g. Supplementary References \citen{Karasik, Brooke,Wu}). A decoherence-free subspace is a subspace of the Hilbert space invisible to dissipation, i.e. ${\cal H}_{DFS} \subseteq {\cal H}$ and for any pure state $\ket{\psi(t)} \in {\cal H}_{DFS}$ (with $\rho(t)=\ket{\psi(t)}\bra{\psi(t)}$) we have $\partial_t \tr{\rho(t)^2}=0$ (e.g. Def. 2 of Supplementary Reference \citen{Karasik}).  In other words, all pure states in the subspace undergo coherent time evolution given by the system's Hamiltonian and remain pure \cite{AlbertJiang2}. Therefore, they may be understood as restriction of the closed system's Hamiltonian to a subspace to which the dissipative time evolution guides the system in the long time limit. For instance, the open XXZ spin ring later in this supplementary material is a decoherence-free subspace. This open XXZ spin ring is, to our knowledge, the first example of a decoherence-free subspace in an open quantum \emph{many-body} system. 
	
	In contrast to a decoherence-free subspace the multi-block structure, for which we provide sufficient criteria, is affected by the dissipation. The asymptotic dynamics is coherent, but it consists of generally mixed states. The physical properties of these mixed states are affected by the dissipation. For instance, the dissipation may induce off-diagonal long-range order (like in the example in the main text), or currents of some quantity (like in the examples in the later sections). Furthermore, when we study the quantum stochastic process \cite{Daley} corresponding to the master equation given in Supplementary Equation \eqref{eq:lindeq}, the dynamics in a decoherence-free subspace is purely deterministic (every quantum trajectory is the same, like in the closed system), whereas in the multi-block structure the dynamics is stochastic (the ensemble of trajectories is non-trivial). 
	
		\subsection*{Generalizing beyond the Markovian framework}
	
	It is fairly straightforward to generalize the discussion in the main text beyond the Markovian framework. Let the full Hamiltonian of the system (with Hilbert space ${\cal H}_S$) and bath (with Hilbert space ${\cal H}_B$) be given as
	\begin{equation}
	H= H_S+H_{SB}+H_B,
	\end{equation}
	where $H_S \in {\cal H}_S$ is the system's Hamiltonian, $H_B \in {\cal H}_B$ is the Hamiltonian of the bath and $H_{SB} \in {\cal H}_S \otimes {\cal H}_B$ is the system-bath interaction. In general, we may write $H_{SB}=\sum_j S_j \otimes B_j$, where $S \in {\cal H}_S$ and $B\in {\cal H}_B$. 
	
	Let $A$ be an eigenoperator of the full Hamiltonian, $[A,H]=\lambda A$. As $H$ is Hermitian $\lambda \in \RaR$. The case studied in the main text would correspond to $A=A_S \otimes \one_B$, with $A_S \in {\cal H}_S$ and,
	\begin{equation}
	[A_S,H_S]=\lambda A_S, \qquad [A_S, H_{SB}]=0. \label{gencond}
	\end{equation}
	The full time evolution of the open system is given as,
	\begin{equation}
	\rho_S(t):={\cal \hat{T}}_t \rho(0)=\tr_B \left[ e^{-\ii H t} \rho(0) e^{\ii H t} \right],
	\end{equation}
	where $\tr_B$ represents tracing over the bath degrees of freedom. 
	
	Let $\rho_\infty \in {\cal H}_S$ be a stationary state of the dynamical map in the sense that for some $\rho'(0)$ we have,
	\begin{equation}
	\rho_\infty=\lim_{t \to \infty}{\cal \hat{T}}_t \rho'(0). \label{asymp}
	\end{equation} 
	Such as state always exists (e.g. any incoherent mixture of the eigenstate of $H$ will produce such a state trivially). We look at the asymptotic time evolution of $\rho_{nm}=A^n \rho'(0) (A^\dagger)^m$. It follows from $[H,A]=\lambda A$ that,
	\begin{equation}
	e^{-\ii H t} A e^{\ii H t}= e^{\ii \lambda t} A. \label{heisenberg}
	\end{equation}
	We decompose the density matrix of the full system and bath as $\rho(t)=\sum_{i,j} s_i(t) \otimes b_j(t)$ and write,
	\begin{equation}
	\rho_S(t)=\tr_B \sum_{i,j} s_i(t) \otimes b_j(t)=\sum_{i,j}c_j(t) s_i(t), \label{decomp}
	\end{equation}
	where $s_j(t) \in {\cal H}_S$ and  $b_j(t) \in {\cal H}_b$ and $c_j(t)=\tr_B b_j(t)$.
	
	Then, 
	\begin{eqnarray}
	\lim_{t \to \infty} {\cal \hat{T}}_t \rho_{nm}&=&\lim_{t \to \infty} \tr_B \left[ e^{-\ii H t}A^n \rho'(0) (A^\dagger)^m  e^{\ii H t} \right] \nonumber \\
	&=&\lim_{t \to \infty}  \tr_B \left[e^{\ii n \lambda t} A^n e^{-\ii H t} \rho'(0)   e^{\ii H t}(A^\dagger)^m  e^{- \ii m \lambda t} \right]  \nonumber \\
	&=&\lim_{t \to \infty}  \tr_B \left[e^{\ii n \lambda t} A^n \left( \sum_{i,j} s_i(t) \otimes b_j(t) \right) (A^\dagger)^m  e^{- \ii m \lambda t} \right]  \nonumber \\
	&=&\lim_{t \to \infty}  e^{\ii n \lambda t}    A^n \left( \sum_{i,j} c_j(t) s_i(t)   \right) (A^\dagger)^m e^{- \ii m \lambda t}  \nonumber \\
	&=& \lim_{t \to \infty} e^{\ii n \lambda t}  A^n \rho_\infty (A^\dagger)^m e^{- \ii m \lambda t},  
	\end{eqnarray}
	where to obtain the second equality we repeatedly inserted $\one= e^{-\ii H t}  e^{\ii H t} $ between the products of $A$'s and used Supplementary Equation \eqref{heisenberg}. To obtain the third and fourth equality we used Supplementary Equation \eqref{decomp} and the fact that $A \in {\cal H}_S$. The last equality was obtained by recognizing that $\rho_\infty = \lim_{t \to \infty} \rho_S(t)$ is a stationary state.
	
	We emphasize that $\rho_{nm}$ are \emph{not} density matrices, but their linear combinations can be chosen to be.

	\subsection*{Dynamical decoherence-free many-body subspaces}
	
	We will define what we mean by a \emph{standard} dark Hamiltonian $\dH:=[\tilde{H},\bullet]$ (or dynamical many-body decoherence-free subspace) with $\tilde{H}=\tilde{H}^\dagger$. We want the spectrum of $\dH$ to be made up of \emph{pure} mutually orthogonal eigenstates and to have the same left and right eigenvectors. 
	Namely:
	\begin{enumerate}
		\item We first wish that, \begin{equation}
		\dH \ketbra{\phi_n}{\phi_m}=(\omega_n-\omega_m)\ketbra{\phi_n}{\phi_m}, \quad \omega_k \in \RaR, \forall k. \label{eigs1}
		\end{equation} \label{cond1}
		\item Where we also desire that eigenmodes $\rho_{nm}:=\ketbra{\phi_n}{\phi_m}$ are mutually orthogonal in the Hilbert-Schmidt sense, $\tr \rho_{nm}^\dagger \rho_{n' m'}=\delta_{n,n'} \delta_{m,m'}$. \label{cond2}
		\item Finally we require that the left and right eigenvectors match ensuring that $\dH=\dH^\dagger$. \label{cond3}
	\end{enumerate}
	
	We state the details of the conditions under which properties~\ref{cond1}-\ref{cond3} will be fulfilled in Supplementary Theorem~\ref{th1}.
	
	\begin{theorem} \label{th1}
		A set of mutually orthogonal vectors $\{\ket{\phi_1},\ket{\phi_2}, \ldots \}$ forms a set of eigenvectors of a standard dark Hamiltonian (or a decoherence-free subspace) in the sense of properties~\ref{cond1}-\ref{cond3} iff the following conditions are fulfilled,
		\begin{enumerate}[label=(\alph*)]
			\item $
			\left(\ii H +  \sum_k \gamma_k L_k^\dagger L_k \right) \ket{\phi_n} = \lambda_n \ket{\phi_n}, \quad \forall n, \label{cond1th1}
			$
			\item$
			L_k \ket{\phi_n}=\lambda_{k,n}  \ket{\phi_n} ,
			$ and $\sum_k \gamma_k \abs{\lambda_{k,n}}^2=\Re{\lambda_n},  \quad \forall n $, \label{cond2th1}
			\item $\Re \left[ \sum_k  \gamma_k \left ( 2\lambda_ {k,n} \lambda^*_{k,m} -\abs{\lambda_{k,n}}^2-\abs{\lambda_{k,m}}^2 \right ) \right]=0, \forall n,m $. \label{cond3th1}
		\end{enumerate} 
		Then the eigenvalues from Supplementary Equation \eqref{eigs1} are given as \begin{equation}
		\omega_n-\omega_m=\im\left(\sum_k 2  \gamma_k \lambda_ {k,n} \lambda^*_{k,m}  - \bra{\phi_m}H \ket{\phi_n}\right). \label{eqth1}
		\end{equation}
	\end{theorem}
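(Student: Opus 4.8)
Both implications reduce to a single computation: evaluating $\LL\rho_{nm}$ on $\rho_{nm}:=\ketbra{\phi_n}{\phi_m}$ and showing it is a scalar multiple of $\rho_{nm}$. I would do this first. The jump term is diagonal by condition~(b): $\sum_k 2\gamma_k L_k\rho_{nm}L_k^\dagger=2\big(\sum_k\gamma_k\lambda_{k,n}\lambda_{k,m}^*\big)\rho_{nm}$. For the anticommutator I would rewrite its left factor via condition~(a), $\sum_k\gamma_k L_k^\dagger L_k\ket{\phi_n}=(\lambda_n-\ii H)\ket{\phi_n}$, and its right factor via the adjoint of the same relation, $\bra{\phi_m}\sum_k\gamma_k L_k^\dagger L_k=\bra{\phi_m}(\lambda_m^*+\ii H)$; adding these gives $\sum_k\gamma_k\{L_k^\dagger L_k,\rho_{nm}\}=(\lambda_n+\lambda_m^*)\rho_{nm}-\ii[H,\rho_{nm}]$. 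The $\ii[H,\bullet]$ piece cancels the $-\ii[H,\bullet]$ of $\LL$ exactly, leaving $\LL\rho_{nm}=\Lambda_{nm}\,\rho_{nm}$ with $\Lambda_{nm}:=2\sum_k\gamma_k\lambda_{k,n}\lambda_{k,m}^*-\lambda_n-\lambda_m^*$.

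\textbf{``If'' direction.} Condition~(b) gives $\Re\lambda_n=\sum_k\gamma_k\abs{\lambda_{k,n}}^2$, so $\Re\Lambda_{nm}=\Re\big[\sum_k\gamma_k(2\lambda_{k,n}\lambda_{k,m}^*-\abs{\lambda_{k,n}}^2-\abs{\lambda_{k,m}}^2)\big]$, which is precisely the left-hand side of condition~(c); hence condition~(c) $\iff\Lambda_{nm}\in\ii\,\RaR$, i.e.\ every $\rho_{nm}$ is an oscillating coherence. Condition~(c) is equivalently $\sum_k\gamma_k\abs{\lambda_{k,n}-\lambda_{k,m}}^2=0$, so for positive rates all $\ket{\phi_n}$ carry a common eigenvalue $\lambda_k$ under each $L_k$; this makes the frequency differences $\omega_n-\omega_m:=\ii\,\Lambda_{nm}\in\RaR$ satisfy the cocycle identity $(\omega_n-\omega_0)+(\omega_0-\omega_m)=\omega_n-\omega_m$, so one may set $\tilde H:=\sum_n\omega_n\ketbra{\phi_n}{\phi_n}$, which is Hermitian. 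Then $\dH=[\tilde H,\bullet]$ is the dark Hamiltonian, $\LL=-\ii\dH$ on $\mathrm{span}\{\rho_{nm}\}$, property~\ref{cond1} holds, and substituting the diagonal matrix element of condition~(a), $\lambda_n=\sum_k\gamma_k\abs{\lambda_{k,n}}^2+\ii\bra{\phi_n}H\ket{\phi_n}$, casts $\ii\Lambda_{nm}$ into the form of Supplementary Equation~\eqref{eqth1}. Property~\ref{cond2} is immediate from orthonormality, $\tr\rho_{nm}^\dagger\rho_{n'm'}=\braket{\phi_n}{\phi_{n'}}\braket{\phi_{m'}}{\phi_m}=\delta_{n,n'}\delta_{m,m'}$; and property~\ref{cond3} holds because $\tilde H$ is Hermitian, so $[\tilde H,\bullet]$ is Hilbert--Schmidt self-adjoint with left and right eigenvectors both equal to $\rho_{nm}$.

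\textbf{``Only if'' direction.} Assume properties~\ref{cond1}--\ref{cond3}. Taking $n=m$ in property~\ref{cond1} gives $\dH\ketbra{\phi_n}{\phi_n}=0$, hence $\LL\ketbra{\phi_n}{\phi_n}=0$: a \emph{pure} stationary state. The standard argument then applies: $0=\tr\big(\ketbra{\phi_n}{\phi_n}\,\LL\ketbra{\phi_n}{\phi_n}\big)=-2\sum_k\gamma_k\big\lVert(L_k-\bra{\phi_n}L_k\ket{\phi_n})\ket{\phi_n}\big\rVert^2\le 0$, so every summand vanishes and $L_k\ket{\phi_n}=\lambda_{k,n}\ket{\phi_n}$ with $\lambda_{k,n}:=\bra{\phi_n}L_k\ket{\phi_n}$ --- the eigenoperator part of condition~(b). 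Inserting this back into $\LL\ketbra{\phi_n}{\phi_n}=0$ and letting the resulting operator identity act on $\ket{\phi_n}$ yields $(\ii H+\sum_k\gamma_k L_k^\dagger L_k)\ket{\phi_n}=\big(\ii\bra{\phi_n}H\ket{\phi_n}+\sum_k\gamma_k\abs{\lambda_{k,n}}^2\big)\ket{\phi_n}$, which is condition~(a) together with $\Re\lambda_n=\sum_k\gamma_k\abs{\lambda_{k,n}}^2$, completing condition~(b). With (a) and (b) in hand the computation of the first paragraph gives $\LL\rho_{nm}=\Lambda_{nm}\rho_{nm}$; but property~\ref{cond1} and $\LL=-\ii\dH$ on the span force $\Lambda_{nm}=-\ii(\omega_n-\omega_m)$ with $\omega_n-\omega_m\in\RaR$, so $\Re\Lambda_{nm}=0$, which rearranges into condition~(c).

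\textbf{Main obstacle.} The forward computation is mechanical but sign-sensitive --- the Hamiltonian-commutator cancellation and the placement of complex conjugates must be tracked carefully, and reading off \eqref{eqth1} requires fixing the convention that ties a purely imaginary Liouvillian eigenvalue to the frequency gap $\omega_n-\omega_m$ (and, in \eqref{eqth1}, using that condition~(a) makes $\bra{\phi_m}H\ket{\phi_n}$ collapse to its diagonal value $\im\lambda_n$). The genuinely non-routine step is the converse: one has to recognise that the diagonal blocks of property~\ref{cond1} say $\ketbra{\phi_n}{\phi_n}$ is a \emph{pure} fixed point, and then deploy the ``dark-state'' lemma to extract the eigenoperator structure in condition~(b); after that, (a) and (c) follow by the same algebra as in the ``if'' direction. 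It is also worth making explicit that condition~(c) is nothing but the Lidar--Chuang--Whaley decoherence-free-subspace criterion $\sum_k\gamma_k\abs{\lambda_{k,n}-\lambda_{k,m}}^2=0$, which is exactly what simultaneously forces $\Lambda_{nm}\in\ii\RaR$ and guarantees the existence of a Hermitian $\tilde H$ realising \eqref{eigs1}.
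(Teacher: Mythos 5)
Your proof is correct and follows the same logical skeleton as the paper's: conditions (a) and (b) are exactly the Kraus et al.\ characterization of pure stationary states $\LL\ketbra{\phi_n}{\phi_n}=0$, and condition (c) is what remains when one demands that the eigenvalue of $\ketbra{\phi_n}{\phi_m}$ be purely imaginary. The differences are ones of completeness rather than strategy. First, you re-derive the Kraus lemma from the variance identity $\tr(\rho\LL\rho)=-2\sum_k\gamma_k\lVert(L_k-\langle L_k\rangle)\ket{\phi_n}\rVert^2$ instead of citing it, which makes the argument self-contained. Second, and more substantively, you establish the full operator identity $\LL\rho_{nm}=\Lambda_{nm}\rho_{nm}$ with $\Lambda_{nm}=2\sum_k\gamma_k\lambda_{k,n}\lambda_{k,m}^*-\lambda_n-\lambda_m^*$; the paper only sandwiches with $\bra{\phi_m}\cdot\ket{\phi_n}$, which extracts the eigenvalue but does not by itself show that $\rho_{nm}$ is an eigenmode, so your computation actually closes a small gap in the ``if'' direction. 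Your observation that condition (c) is equivalent to $\sum_k\gamma_k\abs{\lambda_{k,n}-\lambda_{k,m}}^2=0$ (the standard decoherence-free-subspace criterion, forcing a common eigenvalue of each $L_k$ on the whole family when all $\gamma_k>0$) is a genuine clarification not in the paper. One caveat on the final formula: your exact expression gives $\omega_n-\omega_m=\pm\bigl[\im\bigl(2\sum_k\gamma_k\lambda_{k,n}\lambda_{k,m}^*\bigr)-\bra{\phi_n}H\ket{\phi_n}+\bra{\phi_m}H\ket{\phi_m}\bigr]$, whereas Supplementary Equation \eqref{eqth1} contains the single off-diagonal element $\bra{\phi_m}H\ket{\phi_n}$, which conditions (a) and (b) force to vanish for $n\neq m$; this mismatch is a defect of the stated formula, not of your derivation, and your remark that the reduction to \eqref{eqth1} is convention-dependent is the honest way to flag it.
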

	\begin{proof}
		Conditions~\ref{cond1th1} and~\ref{cond2th1} are just the well-known conditions of Theorem 1 of Supplementary Reference \citen{Kraus} for pure stationary states $\LL(\ket{\phi_n}\bra{\phi_n})=0$. Condition~\ref{cond3th1} can be shown as follows: We begin by writing out Supplementary Equation \eqref{eigenmodes} for $\rho=\ket{\phi_m} \bra{\phi_n}$, and taking the product with $\bra{\phi_m}$ from the left and $\ket{\phi_n}$ from the right. Then we demand that the corresponding eigenvalue $\lambda$ is purely imaginary and use conditions~\ref{cond1th1} and~\ref{cond2th1} of the theorem. Recalling the definitions in Supplementary Equation \eqref{eigs1}, this also leads to Supplementary Equation \eqref{eqth1}.
	\end{proof}
	
	\begin{theorem} \label{th2}
		If there is no subspace ${\cal S} \subset {\cal H}$ that is orthogonal to the space of dark states\footnote{Dark states $\ket{\phi_n}$ are defined as $L_k \ket{\phi_n}=0$ and $H \ket{\phi_n}=\omega_n \ket{\phi_n}$, $\forall k$.} ${\cal D}$ (i.e., ${\cal S} \perp {\cal D}$) such that $L_k {\cal S} \subset {\cal S}$, then the only oscillating coherences are vectors in a decoherence free subspace in the sense that they have the form $\ket{\phi_n}\bra{\phi_m}$ with $\ket{\phi_i} \in {\cal D}$ being dark states. 
	\end{theorem}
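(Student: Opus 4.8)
The plan is to show that the peripheral eigenspace ${\cal A}$ of $\LL$ --- the span of all eigenmodes with purely imaginary eigenvalue --- is contained in ${\cal B}({\cal D})$, the operators supported inside the dark subspace ${\cal D}$, and then to read off the stated form from the eigenvalue equation restricted to ${\cal D}$. Two facts about ${\cal D}$ require no hypothesis and will be used throughout: because ${\cal D}$ is spanned by dark states and each $L_k$ kills every dark state, $L_k P_{\cal D}=0$ (with $P_{\cal D}$ the orthogonal projector onto ${\cal D}$); and because ${\cal D}$ is spanned by eigenvectors of $H$ it is $H$-invariant, so $H$ is block diagonal for the splitting ${\cal H}={\cal D}\oplus{\cal D}^\perp$. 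Granting ${\cal A}\subseteq{\cal B}({\cal D})$, an oscillating coherence $\rho$ obeys $\rho=P_{\cal D}\rho P_{\cal D}$, so $L_k\rho=L_k P_{\cal D}\rho P_{\cal D}=0$ and likewise $\rho L_k^\dagger=L_k^\dagger L_k\rho=\rho L_k^\dagger L_k=0$; the dissipator of $\LL$ then annihilates $\rho$ and $\LL\rho=\lambda\rho$ collapses to $-\ii[H,\rho]=\lambda\rho$ with $\rho$ acting on the finite-dimensional $H$-invariant space ${\cal D}$. Diagonalizing $H|_{\cal D}$ in an orthonormal dark-state basis $\{\ket{\phi_n}\}$ immediately forces $\rho$ to be a linear combination of the $\ketbra{\phi_n}{\phi_m}$ with $\lambda=-\ii(\omega_n-\omega_m)$, which is the assertion. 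So the whole problem reduces to ${\cal A}\subseteq{\cal B}({\cal D})$.

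For that I would first pass to stationary states. Since $\LL\rho=\lambda\rho$ iff $\LL\rho^\dagger=\lambda^*\rho^\dagger$, the space ${\cal A}$ is $\dagger$-closed, and --- being the peripheral eigenspace of the trace-preserving completely positive semigroup $e^{\LL t}$ --- it is spanned by its positive semidefinite elements; so it suffices to prove $\mathrm{supp}(\sigma)\subseteq{\cal D}$ for every $\sigma\in{\cal A}$ with $\sigma\ge 0$. Write $\sigma=\sum_\theta\sigma_\theta$ with $\LL\sigma_\theta=\ii\theta\sigma_\theta$ ($\theta\in\RaR$), and let $\sigma_0=\lim_{T\to\infty}\frac{1}{T}\int_0^T e^{\LL t}\sigma\,\dd t$ be the stationary component, so $\sigma_0\ge 0$ and $\LL\sigma_0=0$. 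For any vector $\ket{v}$ with $\sigma_0\ket{v}=0$, the function $t\mapsto\bra{v}e^{\LL t}\sigma\ket{v}=\sum_\theta e^{\ii\theta t}\bra{v}\sigma_\theta\ket{v}$ is a nonnegative trigonometric polynomial whose mean over $t$ equals $\bra{v}\sigma_0\ket{v}=0$; a nonnegative trigonometric polynomial with vanishing mean vanishes identically, so in particular $\bra{v}\sigma\ket{v}=0$ and hence $\sigma\ket{v}=0$. Therefore $\mathrm{supp}(\sigma)\subseteq\mathrm{supp}(\sigma_0)$, and it is enough to show $\mathrm{supp}(\sigma_*)\subseteq{\cal D}$ for every stationary $\sigma_*\ge 0$.

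Here the hypothesis enters. In the splitting ${\cal H}={\cal D}\oplus{\cal D}^\perp$ decompose $\sigma_*$ into diagonal blocks $A\ge 0$ on ${\cal D}$ and $C\ge 0$ on ${\cal D}^\perp$ and an off-diagonal block $B$, and each $L_k$ into its only two possibly nonzero blocks, $m_k:=P_{\cal D}L_k(\one-P_{\cal D})$ and $\ell_k:=(\one-P_{\cal D})L_k(\one-P_{\cal D})$ (the first block-column of $L_k$ vanishes because $L_k P_{\cal D}=0$). Projecting $\LL\sigma_*=0$ onto the $({\cal D}^\perp,{\cal D}^\perp)$ corner gives
\[
-\ii\,[(\one-P_{\cal D})H(\one-P_{\cal D}),\,C]+\sum_k\left(2\ell_k C\ell_k^\dagger-(m_k^\dagger m_k+\ell_k^\dagger\ell_k)\,C-C\,(m_k^\dagger m_k+\ell_k^\dagger\ell_k)\right)=0.
\]
Taking the trace and using cyclicity, everything cancels except $-2\sum_k\tr(m_k C m_k^\dagger)$, so $\sum_k\tr(m_k C m_k^\dagger)=0$; each summand is nonnegative, hence $m_k C=0$ for every $k$, i.e.\ $m_k$ annihilates $\mathrm{supp}(C)$. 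Next, sandwiching the same corner identity between $\one-P_{\mathrm{supp}(C)}$ on both sides kills the commutator and both anticommutator terms (because $C=P_{\mathrm{supp}(C)}C P_{\mathrm{supp}(C)}$), leaving $\sum_k 2\,N_k C N_k^\dagger=0$ with $N_k:=(\one-P_{\mathrm{supp}(C)})\,\ell_k\,P_{\mathrm{supp}(C)}$; again each term is nonnegative, so $N_k C=0$, which says $\ell_k\,\mathrm{supp}(C)\subseteq\mathrm{supp}(C)$. Combining the two facts, $L_k\,\mathrm{supp}(C)\subseteq\mathrm{supp}(C)$ for all $k$, while $\mathrm{supp}(C)\subseteq{\cal D}^\perp$ is orthogonal to ${\cal D}$; by the hypothesis $\mathrm{supp}(C)=\{0\}$, so $C=0$, and positivity of $\sigma_*$ then forces $B=0$ as well. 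Thus $\sigma_*=A\oplus 0$ is supported on ${\cal D}$, closing the argument.

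The step I expect to be the main obstacle is the passage from stationary states to general positive elements of ${\cal A}$: one has to exclude ``oscillating'' positive elements whose support strictly exceeds that of their stationary average, and this is exactly where one uses that the peripheral part of $\LL$ is spanned by positive operators and is diagonalizable --- both standard consequences of the contractivity of $e^{\LL t}$ in trace norm, and consistent with the multi-block picture already invoked in this supplement (alternatively, one may quote that structure directly, to the effect that peripheral modes are supported inside the maximal stationary state, and skip the averaging). Everything else is bookkeeping with $2\times 2$ operator blocks; the hypothesis itself is used exactly once, to conclude $\mathrm{supp}(C)=\{0\}$.
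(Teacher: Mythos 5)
Your proof is correct, but it takes a genuinely different route from the paper's. The paper outsources both structural inputs to the literature: it cites Theorem 2 of Kraus et al.\ to conclude that under the hypothesis the only stationary states are the pure dark states $\ketbra{\phi_n}{\phi_n}$, and Theorem 18 of Baumgartner--Narnhofer to write every oscillating coherence as $A\rho_\infty$; it then sets $\ket{\Psi}=A\ket{\phi_n}$ and runs a short positivity argument on the rank-one object $\ketbra{\Psi}{\phi_n}$ (sandwiching the eigenvalue equation with $\bra{\Psi}\cdot\ket{\phi_n}$, using $\|L_k\ket{\Psi}\|^2\ge 0$ and $\bra{\Psi}H\ket{\Psi}\in\RaR$) to force $L_k\ket{\Psi}=0$ and $H\ket{\Psi}=\Lambda\ket{\Psi}$. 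You instead prove both inputs yourself: your block decomposition of $\LL\sigma_*=0$ in the splitting ${\cal H}={\cal D}\oplus{\cal D}^\perp$ is in effect a self-contained proof of the Kraus result (the trace trick giving $m_kC=0$ and the corner-sandwich giving $\ell_k$-invariance of $\mathrm{supp}(C)$ are exactly where the hypothesis bites), and your nonnegative-trigonometric-polynomial averaging argument replaces the Baumgartner--Narnhofer citation by showing $\mathrm{supp}(\sigma)\subseteq\mathrm{supp}(\sigma_0)$ for positive peripheral $\sigma$. What your version buys is transparency about where positivity and the hypothesis enter, and a cleaner endgame (once ${\cal A}\subseteq{\cal B}({\cal D})$ the dissipator dies and one just diagonalizes $H|_{\cal D}$); what it costs is that you still lean on the standard structural facts that the peripheral subspace is diagonalizable and spanned by its positive elements, which are themselves essentially the content of the Baumgartner--Narnhofer machinery you are trying to avoid quoting. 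One shared cosmetic caveat: with degenerate Bohr frequencies both arguments really yield linear combinations of $\ketbra{\phi_n}{\phi_m}$ at fixed $\omega_n-\omega_m$ rather than individual dyads, which is the intended reading of the theorem.
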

	\begin{proof}
		Theorem 2 of Supplementary Reference \citen{Kraus} guarantees that if there is no such subspace ${\cal S}$ then the only stationary states are dark states $\LL \ket{\phi_n}\bra{\phi_n}=0$. According to Theorem 18 of Supplementary Reference \citen{BN} all oscillating coherences must be of the form $A \rho_\infty$ where $\LL \rho_\infty=0$ and $A$ is an operator. We know that the only stationary states are of the form $\ket{\phi_n}\bra{\phi_n}$. Writing $\ket{\Psi}=A \ket{\phi_n}$, we have that any oscillating coherence must satisfy 
		$-\ii [H,\ket{\Psi}\bra{\phi_n}] + \sum_\mu \gamma_\mu \left(2L_\mu \ket{\Psi}\bra{\phi_n} L_\mu^\dagger - \{L_\mu^\dagger L_\mu,\ket{\Psi}\bra{\phi_n}\}\right)= \ii \Lambda \ket{\Psi}\bra{\phi_n}$, with $\Lambda \in \RaR$. Using $L_k \ket{\phi_n}=0$ and $H \ket{\phi_n}=\omega_n \ket{\phi_n}$ we are left with $-\ii H \ket{\Psi}\bra{\phi_n} -L_\mu^\dagger L_\mu\ket{\Psi}\bra{\phi_n}=\ii \Lambda \ket{\Psi}\bra{\phi_n} $. We take the product of this with $\bra{\Psi}$ from the left and $\ket{\phi_n}$ from the right, thus we have $\bra{\Psi}L^\dagger_k L_k \ket{\Psi} = \abs{\abs{L_k \ket{\Psi}}}^2 \ge 0$. We then use the fact that $H$ is Hermitian and so $\bra{\Psi} H \ket{\Psi} \in \RaR$. Then the only way to satisfy the eigenequation is iff $L_k \ket{\Psi}=0$ and $H \ket{\Psi}=\Lambda \ket{\Psi}$, therefore $\ket{\Psi} \in {\cal D}$ is also a dark state.
	\end{proof}

	\subsection*{Theorems on general complex coherent dynamics under dissipation}
	In this section we go beyond $\dH$ being a standard Hamiltonian, i.e. $\dH \neq [\tilde{H},\bullet]$, and move to cases when the eigenmodes of $\dH$ are not pure states and when $\dH$ is not Hermitian. \\
	\vspace{40pt}
	
	\begin{theorem} \label{th3}
		Let $\LL \rho_\infty=0$. If there exists a non-trivial operator $A$ with the property 
		\begin{equation}
		[H,A] \rho_\infty=\lambda A  \rho_\infty, \quad [L_k,A]\rho_\infty=[L^\dagger_k,A]L_k\rho_\infty=0, \quad \forall k, \label{eqth3}
		\end{equation}
		then the state $\rho=A \rho_\infty $ is an eigenstate of the Liouvillian with purely imaginary eigenvalue, $\LL \rho=\ii \lambda \rho, \lambda \in \RaR$. 
	\end{theorem}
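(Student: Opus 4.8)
The plan is to substitute $\rho=A\rho_\infty$ directly into $\LL$ as given in Supplementary Equation \eqref{eq:lindeq} and to commute $A$ all the way to the left through each operator factor, so that the result organizes itself as $A\,\LL\rho_\infty$ — which vanishes by hypothesis — plus a single leftover term proportional to $A\rho_\infty$. First I would dispose of the Hamiltonian part: writing $HA=AH+[H,A]$ and using $[H,A]\rho_\infty=\lambda A\rho_\infty$ gives $-\ii[H,A\rho_\infty]=A\big(-\ii[H,\rho_\infty]\big)-\ii\lambda A\rho_\infty$, and this is the only place the eigenvalue term is produced. Note that one uses $[H,A]\rho_\infty=\lambda A\rho_\infty$ only as an identity between these particular operator products, so $A$ need not be a genuine eigenoperator of $\ad_H$ — this is what pushes the statement beyond the standard-Hamiltonian setting.

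Next I would go through the dissipative part $\sum_k\gamma_k\big(2L_k\,A\rho_\infty\,L_k^\dagger-L_k^\dagger L_k\,A\rho_\infty-A\rho_\infty\,L_k^\dagger L_k\big)$ term by term. The right-multiplication term $A\rho_\infty L_k^\dagger L_k$ already carries $A$ on the left and needs nothing. For the jump term, $L_k A\rho_\infty=AL_k\rho_\infty+[L_k,A]\rho_\infty=AL_k\rho_\infty$ by the hypothesis $[L_k,A]\rho_\infty=0$, so it becomes $A\big(2L_k\rho_\infty L_k^\dagger\big)$. The remaining term $-L_k^\dagger L_k A\rho_\infty$ differs from the desired $-AL_k^\dagger L_k\rho_\infty$ by $-[L_k^\dagger L_k,A]\rho_\infty$, and the key step is the Leibniz expansion $[L_k^\dagger L_k,A]=L_k^\dagger[L_k,A]+[L_k^\dagger,A]L_k$, which makes $[L_k^\dagger L_k,A]\rho_\infty=L_k^\dagger\big([L_k,A]\rho_\infty\big)+[L_k^\dagger,A]L_k\rho_\infty=0$ by the two remaining hypotheses in Supplementary Equation \eqref{eqth3}. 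Hence every term of the dissipator equals $A$ applied to the corresponding term evaluated at $\rho_\infty$, and collecting everything yields $\LL(A\rho_\infty)=A\,\LL\rho_\infty-\ii\lambda A\rho_\infty=-\ii\lambda A\rho_\infty$, which is exactly the claimed $\LL\rho=\ii\lambda\rho$ up to the sign convention for $\lambda$ in Supplementary Equation \eqref{eqth3}; Hermiticity of $H$ together with $\lambda\in\RaR$ keeps the eigenvalue on the imaginary axis. Non-triviality of $A$ enters only to guarantee $A\rho_\infty\neq0$, so that $\rho$ is a genuine nonzero eigenmode and not the trivial solution.

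I do not expect a serious obstacle: the whole content is the bookkeeping above, and the one thing to get right is recognizing that the three conditions in Supplementary Equation \eqref{eqth3} have been engineered precisely so that $[L_k^\dagger L_k,A]\rho_\infty$ vanishes — in particular the asymmetric form $[L_k^\dagger,A]L_k\rho_\infty=0$ (rather than the stronger $[L_k^\dagger,A]\rho_\infty=0$) is exactly what the Leibniz expansion of $[L_k^\dagger L_k,A]$ calls for, and this is the only point where an inattentive argument could stall.
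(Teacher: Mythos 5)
Your term-by-term substitution is correct as far as it goes: the identities $L_kA\rho_\infty=AL_k\rho_\infty$ and $[L_k^\dagger L_k,A]\rho_\infty=L_k^\dagger[L_k,A]\rho_\infty+[L_k^\dagger,A]L_k\rho_\infty=0$ are exactly what the hypotheses in Supplementary Equation \eqref{eqth3} are engineered to deliver, and your bookkeeping reproduces (in more detail than the paper gives) the paper's one-line claim that $[\LL,\hat A]\rho_\infty=\ii\lambda\hat A\rho_\infty$, whence $\LL(A\rho_\infty)=A\LL\rho_\infty\mp\ii\lambda A\rho_\infty=\mp\ii\lambda A\rho_\infty$. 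The overall sign of $\ii\lambda$ is indeed only a convention for $\lambda$ and is harmless.

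However, there is a genuine gap: the theorem asserts that the eigenvalue is \emph{purely imaginary}, i.e.\ $\lambda\in\RaR$ is part of the \emph{conclusion}, and your argument never establishes it. The sentence ``Hermiticity of $H$ together with $\lambda\in\RaR$ keeps the eigenvalue on the imaginary axis'' assumes exactly what must be proved. As you yourself emphasize, the condition $[H,A]\rho_\infty=\lambda A\rho_\infty$ does \emph{not} make $A$ an eigenoperator of the Hermitian superoperator $\adH$, so the standard ``Hermitian operators have real eigenvalues'' argument is unavailable; and general Lindbladian spectral theory applied to your result $\LL(A\rho_\infty)=-\ii\lambda A\rho_\infty$ only yields $\Re(-\ii\lambda)=\im\lambda\le 0$, not $\im\lambda=0$, so a complex $\lambda$ would still be consistent with a (decaying) eigenmode. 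The paper devotes the entire first half of its proof to closing this: it takes the adjoint of Supplementary Equation \eqref{eqth3} to obtain $\rho_\infty[H,A^\dagger]=-\lambda^*\rho_\infty A^\dagger$, notes that the superoperator $\rho\mapsto[H,A^\dagger A]\rho$ is skew-Hermitian with respect to the Hilbert--Schmidt inner product, and evaluates $\tr\left(\rho_\infty[H,A^\dagger A]\rho_\infty\right)=(\lambda-\lambda^*)\,\tr\left(\rho_\infty A^\dagger A\rho_\infty\right)$ with $\tr\left(\rho_\infty A^\dagger A\rho_\infty\right)>0$ since $A\rho_\infty\neq 0$, concluding $\lambda=\lambda^*$. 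Whether or not one finds every step of that trace argument fully transparent, some dedicated argument for the reality of $\lambda$ is an essential component of the theorem, and it is the one piece your proposal omits.
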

	
	\begin{proof}
		Take the conjugate transpose of Supplementary Equation \eqref{eqth3}. We get $\rho_\infty [H,A^\dagger] =-\lambda^* \rho_\infty A^\dagger$ and $\rho_\infty[L_k^\dagger,A^\dagger]=\rho_\infty L_k^\dagger[L_k,A^\dagger]=0, \forall k$, as $\rho_\infty$ is Hermitian. Now define a superoperator $\hat{S} \rho:=[H,A^\dagger A] \rho, \forall \rho \in \End({\cal H})$. The superoperator is a commutator of two Hermitian operators and is therefore clearly skew-Hermitian (with purely imaginary eigenvalues). We take the Hilbert-Schmidt inner product of this superoperator with $\rho_\infty$. The result is purely imaginary, $\tr\left( \rho_\infty [H,A^\dagger A] \rho_\infty \right)=\ii \theta, \theta \in \RaR$, as $\hat{S}$ is skew-Hermitian. It is straightforward to calculate, using Supplementary Equation \eqref{eqth3} and the conjugate transpose of Supplementary Equation \eqref{eqth3}, that $\tr\left( \rho_\infty \hat{S} \rho_\infty\right)=(\lambda-\lambda^*) \tr\left( \rho_\infty A^\dagger A \rho_\infty\right)$. However, $\tr\left( \rho_\infty A^\dagger A \rho_\infty\right)= \abs{\abs {A\rho_\infty}}>0$. Thus the rhs is purely real, whereas the lhs is purely imaginary. Therefore, $\lambda=\lambda^*$ (i.e, $\lambda \in \RaR$).
		
		We define a new superoperator $\hat{A} \rho:=A \rho, \forall \rho \in \End({\cal H})$. Using the definition of the Liouvillain $\LL$ from Supplementary Equation \eqref{eq:lindeq} and using Supplementary Equation \eqref{eqth3} (and its conjugate transpose), it is straightforward to show that $[\LL,\hat{A}]\rho_\infty=\ii \lambda \hat{A}\rho_\infty$. Then, using $\LL \rho_\infty=0$, the statement of the theorem directly follows. 
	\end{proof}

	\begin{corollary}
		\label{cor1}
		In particular, if 
		\begin{equation}
		[H,A] =\lambda A,  \quad [L_k,A]=[L^\dagger_k,A]=0, \quad \forall k, \label{eqco1}
		\end{equation}
		then $\LL \rho_{nm}=\ii \lambda (n-m) \rho_{nm}, \lambda \in \RaR$, with $\rho_{nm}=A^n \rho_\infty (A^\dagger)^m.$
	\end{corollary}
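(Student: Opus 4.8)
The plan is to upgrade the computation inside the proof of Supplementary Theorem~\ref{th3} from an identity holding ``at $\rho_\infty$'' to a genuine superoperator identity on all of $\End({\cal H})$, and then iterate it. First I would observe that the global operator relations in Supplementary Equation~\eqref{eqco1} are strictly stronger than the hypotheses of Supplementary Theorem~\ref{th3}: right-multiplying each of them by $\rho_\infty$ gives $[H,A]\rho_\infty=\lambda A\rho_\infty$, $[L_k,A]\rho_\infty=0$, and (trivially, since $[L_k^\dagger,A]=0$) $[L_k^\dagger,A]L_k\rho_\infty=0$. Hence Supplementary Theorem~\ref{th3} applies with the operator $A$, which already yields $\lambda\in\RaR$ and settles the case $n=m=1$. (Reality of $\lambda$ can also be read off directly: $\adH$ is self-adjoint in the Hilbert--Schmidt inner product, so its eigenvalues are real.) Applying Supplementary Theorem~\ref{th3} to the operator $A^n$ --- which also satisfies Supplementary Equation~\eqref{eqco1}, now with eigenvalue $n\lambda$ --- handles the one-sided objects $\rho_{n0}$, and conjugating via Supplementary Equation~\eqref{eigenmodes} handles $\rho_{0m}$; but the genuinely two-sided $\rho_{nm}$ with $n,m\ge 1$ cannot be obtained this way, so a little more is needed.

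The key step is to introduce the left- and right-multiplication superoperators $\hat A_{\rm L}\rho:=A\rho$ and $\hat A_{\rm R}\rho:=\rho A^\dagger$ on $\End({\cal H})$, which automatically commute with one another, and to establish --- exactly as in the proof of Supplementary Theorem~\ref{th3}, but now on \emph{all} of $\End({\cal H})$ --- the superoperator identities
\begin{equation}
[\LL,\hat A_{\rm L}]=\ii\lambda\,\hat A_{\rm L},\qquad [\LL,\hat A_{\rm R}]=-\ii\lambda\,\hat A_{\rm R}.
\end{equation}
Concretely: expanding $\LL(A\rho)$ from Supplementary Equation~\eqref{eq:lindeq}, the Hamiltonian commutator splits off a term proportional to $A\rho$ and leaves $A\,[H,\rho]$; the jump terms obey $2L_k(A\rho)L_k^\dagger=2A\,L_k\rho L_k^\dagger$ because $[L_k,A]=0$; and since $[L_k,A]=[L_k^\dagger,A]=0$ together imply $[L_k^\dagger L_k,A]=0$, the anticommutator term obeys $\{L_k^\dagger L_k,A\rho\}=A\,\{L_k^\dagger L_k,\rho\}$. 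Thus everything except the $\lambda$-term reassembles into $A\,\LL\rho$, which is the first identity; the second follows the same way after conjugate-transposing Supplementary Equation~\eqref{eqco1}, which (using $\lambda^*=\lambda$) gives $[H,A^\dagger]=-\lambda A^\dagger$ and $[L_k,A^\dagger]=[L_k^\dagger,A^\dagger]=0$.

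The corollary then closes by bookkeeping. Since $\hat A_{\rm L}$ commutes with itself, $[\LL,\hat A_{\rm L}^{\,n}]=\ii\lambda n\,\hat A_{\rm L}^{\,n}$, and likewise $[\LL,\hat A_{\rm R}^{\,m}]=-\ii\lambda m\,\hat A_{\rm R}^{\,m}$; and since $\hat A_{\rm L}$ and $\hat A_{\rm R}$ commute with each other I can write
\begin{equation}
\LL\rho_{nm}=\LL\,\hat A_{\rm L}^{\,n}\hat A_{\rm R}^{\,m}\rho_\infty=\hat A_{\rm L}^{\,n}\hat A_{\rm R}^{\,m}\,\LL\rho_\infty+\ii\lambda(n-m)\,\hat A_{\rm L}^{\,n}\hat A_{\rm R}^{\,m}\rho_\infty=\ii\lambda(n-m)\,\rho_{nm},
\end{equation}
where the first term drops out because $\LL\rho_\infty=0$; if $\rho_{nm}=0$ the assertion is vacuous. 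The one place where genuine care is needed is the middle step: one must verify that it is precisely the \emph{pair} of conditions $[L_k,A]=0$ and $[L_k^\dagger,A]=0$ --- rather than the weaker $\rho_\infty$-level conditions of Supplementary Theorem~\ref{th3} --- that lets $A$ be pulled through both the jump term and the anticommutator, hence promotes the ``at $\rho_\infty$'' statement to an honest superoperator identity that can be composed $n$ and $m$ times. Everything after that is routine.
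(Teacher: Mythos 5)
Your proof is correct, and it takes a recognizably different route from the paper's. The paper's own argument stays entirely inside Supplementary Theorem~\ref{th3}: it first notes that $[H,A^n]=n\lambda A^n$ and $[L_k,A^n]=[L_k^\dagger,A^n]=0$, uses the two-sided version of this to show that $\rho'_\infty=A^m\rho_\infty (A^\dagger)^m$ is itself a stationary state, and then writes $\rho_{nm}=A^{n-m}\rho'_\infty$ (assuming WLOG $n>m$, with the case $m>n$ handled by Hermitian conjugation via Supplementary Equation~\eqref{eigenmodes}) so that Supplementary Theorem~\ref{th3} can be invoked verbatim with $A'=A^{n-m}$. You instead promote the ``at $\rho_\infty$'' computation to the global intertwining relations $[\LL,\hat A_{\rm L}]=\ii\lambda\hat A_{\rm L}$, $[\LL,\hat A_{\rm R}]=-\ii\lambda\hat A_{\rm R}$ and compose them; this buys you a cleaner bookkeeping that avoids both the $n>m$ versus $m>n$ case split and the need to check that $\rho'_\infty$ is a legitimate (Hermitian, stationary) input to Theorem~\ref{th3}, and your correct observation that the \emph{pair} $[L_k,A]=[L_k^\dagger,A]=0$ is what allows $A$ to pass through the anticommutator is exactly the point the paper uses implicitly when it asserts $\LL\rho'_\infty=0$. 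Your parenthetical that $\lambda\in\RaR$ follows directly from self-adjointness of $\ad_H$ in the Hilbert--Schmidt inner product is also a simpler reality argument than the one in Theorem~\ref{th3}. One cosmetic remark: a direct expansion of $-\ii[H,A\rho]$ with $[H,A]=\lambda A$ actually produces $-\ii\lambda A\rho$, so the intertwining relation comes out as $[\LL,\hat A_{\rm L}]=-\ii\lambda\hat A_{\rm L}$ under the sign conventions of Supplementary Equation~\eqref{eq:lindeq}; you have matched the sign stated in Theorem~\ref{th3} and Corollary~\ref{cor1}, which carry the same convention, and since $\lambda$ is an arbitrary real number this is immaterial to the content.
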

	\begin{proof}
		From Supplementary Equation \eqref{eqco1} it also follows that $[H,A^n] =n \lambda A^n , [L_k,A^n]=[L^\dagger_k,A^n]=0, \forall k,n $. Taking the conjugate transpose of that we have $[H,(A^\dagger)^n] =-n \lambda (A^\dagger)^n$, $[L_k,(A^\dagger)^n]=[L^\dagger_k,(A^\dagger)^n]=0, \forall k,n $. Assume without loss of generality that $n>m$: then it follows from Supplementary Equation \eqref{eqco1}, its conjugate transpose and Supplementary Equation \eqref{eq:lindeq} that $\LL A^m \rho_\infty (A^\dagger)^m=0$, i.e. $\rho'_\infty= A^m \rho_\infty (A^\dagger)^m$ is also a stationary state. Define $A'=A^{n-m}$, we then have $\rho_{nm}=A' \rho'_\infty$ and we can now simply invoke Supplementary Theorem~\ref{th3} and arrive at the main statement of the corollary. 
		Recalling that it also directly follows that the state $\rho_{nm}^\dagger$ is an eigenstate with an eigenvalue of opposite sign  $\LL \rho_{nm}^\dagger=-\ii(n-m) \lambda \rho_{nm}^\dagger, \lambda \in \RaR$, we can repeat the same procedure assuming that $m>n$. This proves the corollary. 
	\end{proof}

 	\section*{Supplementary Discussions}
 	We now proceed to give examples and discuss the proposed experimental implementations. 
 	\subsection*{Dynamical decoherence-free subspace: The XXZ spin ring }
 	
 	We study the $n$-site Heisenberg XXZ spin chain with periodic boundary conditions (spin ring),
 	\begin{equation}
 	H_{\rm{XXZ}}=\sum_{j=1}^{n} \sigma^+_j \sigma^-_{j+1}+ \sigma^-_j \sigma^+_{j+1} +\Delta  \sigma^\z_j \sigma^\z_{j+1},
 	\end{equation}
 	where $\Delta$ is an anisotropy parameter and the spin-$1/2$ operators on site $j$ are $\sigma^\alpha_j=\one_2^{\otimes (j-1)} \otimes \sigma^\alpha \otimes\one_2^{\otimes (n-j)}$ (with $\sigma^\alpha$ being the standard Pauli matrices and $\one_2$ is the identity matrix of size $2\rm{x}2$). For the sake of simplicity, we study the dimensionless version of the model. We also introduce a single ultra-local loss term $L= \gamma \sigma_1^-$ with loss rate $\gamma \ge 0$. In the long-time limit this setup induces a dynamical decoherence-free many-body subspace (see Supplementary Theorems~\ref{th1} and~\ref{th2}) that is formed from pure states that are eigenstates of the Hamiltonian $H_{\rm XXZ} \ket{\phi_n}=\omega_n \ket{\phi_n}$ and that are annihilated by $L \ket{\phi_n}=0$. Thus, $\LL \ketbra{\phi_n}{\phi_m}= \ii (\omega_n-\omega_m) \ketbra{\phi_n}{\phi_m}$, and any state of the form $\ket{\Psi}=\ket{\phi_n} +e^{\ii \alpha} \ket{\phi_m}$ will undergo oscillations when $\omega_n \neq \omega_m$, $\forall \alpha$. 
 	
 	The Liouvillian of the model was studied through exact diagonalization, and by requiring that the Bethe ansatz solutions for the eigenvectors of $H_{\rm{XXZ}}$ are dark states of $L$. The spectrum of the corresponding Liouvillian is shown in Supplementary Figure~\ref{fig:XXZ} for different system sizes $n$. We find that the number of distinct purely imaginary eigenvalues scales sub-quadratically with $n$. The spectrum in Supplementary Figure~\ref{fig:XXZ} also shows the formation of oscillatory patterns in the spectral densities that hint at the formation of Bethe strings in the values of the complex quasi-rapidities of the related integrable model. The eigenstates with purely imaginary eigenvalues are states with nodes on the loss site (and may also be understood as lattice scarring \cite{Jordi}, but a 1D quantum many-body version \cite{Turner}). The purely imaginary eigenvalues are not present at the non-interacting point of the model ($\Delta=0$).
 	
 	In the thermodynamic limit the spectrum of the lossy XXZ spin ring Liouvillian seems to be incommensurate. This may again lead to relaxation, though it would likely require much longer time than in the corresponding closed XXZ ring
 	
 	\begin{suppfigure}[!]
 		\begin{center}
 			\vspace{-1mm}
 			\includegraphics[width=1\textwidth]{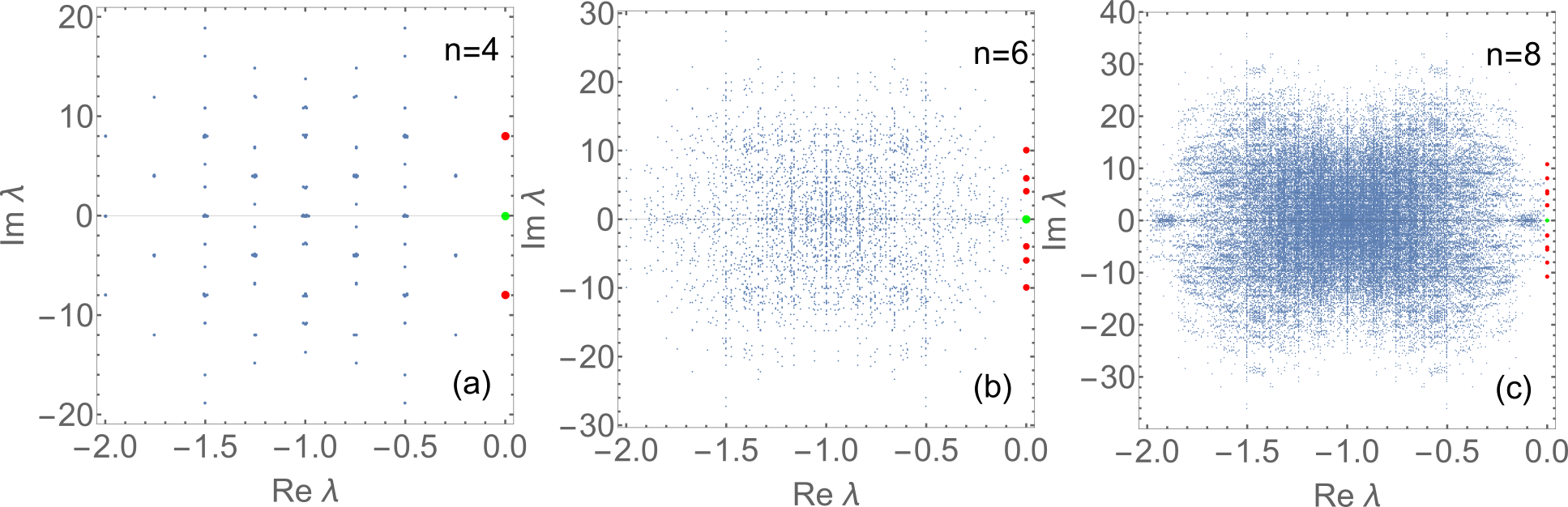}
 			\vspace{-4mm}
 		\end{center}
 		\caption{The spectrum of the XXZ Liouvillian with a single loss term. We study three system sizes n=4 (a), n=6 (b) and n=8 (c) in the gapped regime with $\Delta=2$ and $\gamma=1$ and plot the real and imaginary part of the eigenvalues of $\LL$. The larger red points indicate the purely imaginary eigenvalues and the green points the stationary states (with eigenvalue 0).}    
 		\label{fig:XXZ}
 	\end{suppfigure} 
 	
 	In the long-time limit this setup restricts the dynamics to the dynamical decoherence-free many-body subspace that undergoes continued oscillations. We take $\Delta=1.1$, $n=4$, $\gamma=1$ and show the dynamics in Supplementary Figure~\ref{fig:sup2} for $O(t)=\ave{\sigma^\x_2(t)}$. The time evolution is initialized in a random initial state. Numerical investigation indicates that the dynamical decoherence-free subspace is robust to terms in the Hamiltonian that break integrability, but it is not robust to the presence of additional Lindblad operators. In particular, adding a second loss term destroys the dynamical decoherence-free subspace. We thus compare the time evolution of the same observable starting in the same initial state and with the same parameters, but with two loss terms $L_1= \gamma_1 \sigma_1^-$, $L_2= \gamma_2 \sigma_2^-$ ($\gamma_1=\gamma_2=\gamma$) in Supplementary Figure~\ref{fig:sup2}. This case clearly shows relaxation to stationarity. Furthermore, the stationary state is unique.

 	\begin{suppfigure}[!]
 		\begin{center}
 			\vspace{-1mm}
 			\includegraphics[width=1\textwidth]{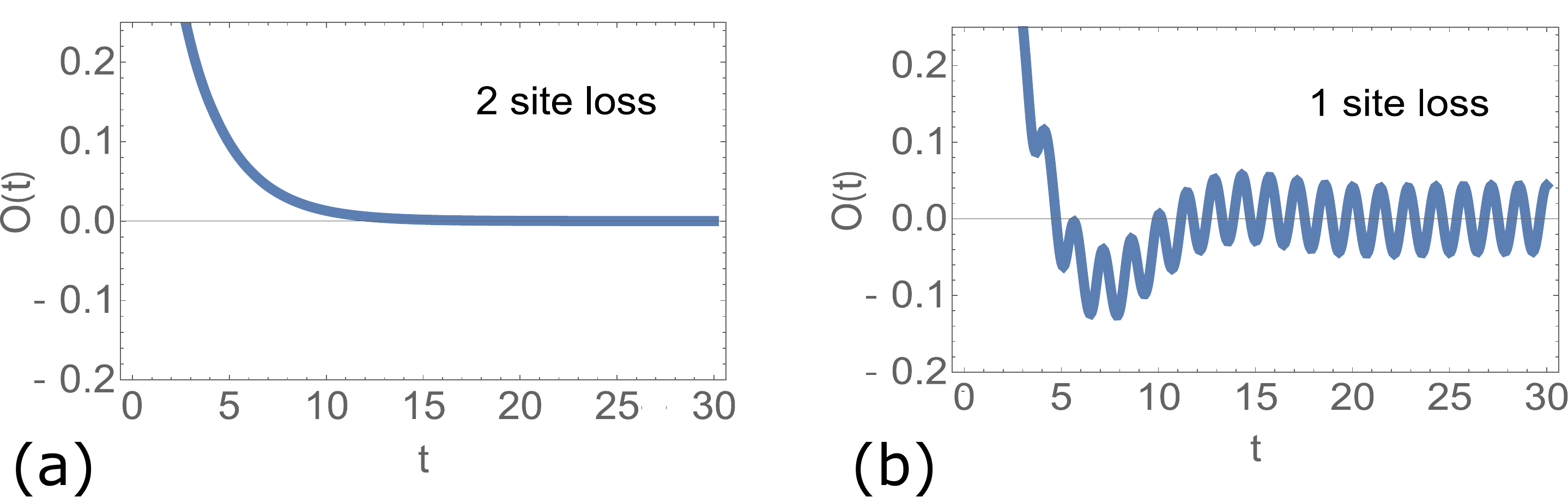}
 			\vspace{-4mm}
 		\end{center}
 		\caption{Comparison of the dynamics of the XXZ spin ring for $\Delta=1.1$, $n=4$, $\gamma=1$ with two (a) and one (b) loss terms. Introducing two loss terms destroys the dynamical decoherence-free subspace and leads to relaxation to stationarity (a). One loss term, instead, preserves this dynamical subspace and leads to persistent oscillations. Results shown for $O(t)=\ave{\sigma^\x_2(t)}$. }    
 		\label{fig:sup2}
 	\end{suppfigure}

 	%
 	
	\subsection*{Multi-block structures: Additional examples with the Hubbard model}
	
	\subsubsection*{Spin dephasing}
	Due to the spin and $\eta$-pairing symmetries of the $D$-dimensional Hubbard model the conditions of Supplementary Corollary~\ref{cor1} will be fulfilled if we take any set of Lindblad operators contained exclusively in either symmetry sector. This will construct a dark Hamiltonian. For instance, in contrast to the main text, we may take a set of ultra-local spin dephasing Lindblad operators $L_k=\gamma_k S^\z_k$ (and set a constant lattice potential $\epsilon_j$=0). The difference to the example in the main text is that the stationary state is now given as, 
	\begin{equation}
	\rho_\infty=C \exp\left(\beta_0\eta^\z+ \beta_1(\eta^+ \eta^-)+\beta_2 S^\z\right), \label{ESS}
	\end{equation}
	where $\beta_m$ play the role of chemical potentials of the grand canonical ensemble, and $C$ is a normalization constant. This dissipation also constructs a dark Hamiltonian with the following eigenmodes,
	\begin{equation}
	\dH \left((\eta^+)^m \rho_\infty  (\eta^-)^n\right)=2(m-n) \mu \left((\eta^+)^m \rho_\infty  (\eta^-)^n\right),  \label{hubbarddark}
	\end{equation}
	where $\mu$ is the chemical potential. Thus starting from an initial state with a well-defined number of particles and doublons we arrive at the stationary state given in Supplementary Equation \eqref{ESS}. However, if we start from an initial state that is off-diagonal in both particle number and the number of doublons, we will get oscillations of the form of Supplementary Equation \eqref{hubbarddark}. The eigenmodes of the dark Hamiltonian all have $\eta$-pairing symmetry, which is of particular interest when studying superfluidity and superconductivity as $\eta$-paired states have long-range off-diagonal order \cite{Yang} for all dimensions $D$. Thus all the eigenmodes of the dark Hamiltonian are superconductive \cite{Yang}. 
	
	\subsubsection*{Transport of doublons}
	
	Let us now take Lindblad jump operators that drive doublons, i.e. in one spatial dimension $L_1=\gamma_1 \eta^+_1$ and $L_2=\gamma_2 \eta^-_n$, and an analogously chosen pair in higher spatial dimension. 
	
	For the same reasons as in the example in the main text we obtain a dark Hamiltonian. The crucial difference now is that the stationary state $\rho_\infty$ will be different. As we are driving doublons, the eigenstates of the dark Hamiltonian will be \emph{non-equilibrium} mixed states, which support a doublon current in addition to having coherent oscillations in the spin sector.

\subsection*{Experimental implementation with ultracold atoms}

Ultracold fermionic atoms trapped in an optical lattice provide a clean experimental realization of the Hubbard model \cite{Esslinger}. The dynamics of the atoms can be restricted to two atomic hyperfine states that realize the two spin states of the Hubbard model. A magnetic field $B$ is used to energetically split the hyperfine levels via the Zeeman effect. Tunneling between lattice sites $\tau$ is controlled by laser properties and short-ranged on-site fermion-fermion interactions $U$ arise because of s-wave scattering between atoms in different spin states. Importantly, the thermal energy of ultracold atoms as well as the energy scales associated with $\tau$, $U$ and $B$ can all be chosen to be much smaller than the gap between Bloch bands hence limiting the dynamics to a single band Hubbard model.

When immersing the optical lattice into a Bose-Einstein condensate additional scattering processes between the lattice atoms and the background Bose-Einstein condensate occur \cite{Klein}. The dominant effect of these additional interactions is local pure dephasing $\gamma_j$ of the lattice atoms described by the operators $L_j$ in the main text (independently of whether the lattice atoms are bosons or fermions). The background gas may also cause small renormalizations of the parameters appearing in the Hubbard model but these are often negligible. Experimentally, such interactions have recently been studied for individual ${}^{133}$Cs atoms immersed in ${}^{87}$Rb condensates paving the way to controlled atom-quantum bath interactions \cite{Widera}.

Our setup requires that the dephasing interaction affects both hyperfine states of the fermionic lattice atoms equally. This spin-agnostic interaction can for instance be realized by forming the Bose-Einstein condensate out of spin-0 bosons. Such two-component mixtures have recently been experimentally realized with fermionic ${}^{87}$Sr and spin-0 ${}^{88}$Sr \cite{Mickelson}. Also mixtures of fermionic ${}^{40} $K atoms and bosonic ${}^{87}$Rb were described using spin-independent Bose-Fermi interactions in Supplementary Reference \citen{Wang}. Finally, to realize the $\eta-$paring example a uniform potential lattice potential is required as was recently realized experimentally in Supplementary Reference \citen{Gaunt}.

\end{document}